\numberwithin{equation}{section}
\newtheorem{thm}{Theorem}[section]
\newtheorem{prop}[thm]{Proposition}
\newtheorem{lem}[thm]{Lemma}
\theoremstyle{definition}
\newtheorem{definition}[thm]{Definition}
\begin{document}

\title[Discrete delta Bose gas and affine Hecke algebra]
{A discrete analogue of periodic delta Bose gas and affine Hecke algebra}
\author{Yoshihiro Takeyama}
\address{Division of Mathematics, 
Faculty of Pure and Applied Sciences, 
University of Tsukuba, Tsukuba, Ibaraki 305-8571, Japan}
\email{takeyama@math.tsukuba.ac.jp}

\begin{abstract}
We consider an eigenvalue problem for a discrete analogue of the Hamiltonian 
of the non-ideal Bose gas with delta-potentials on a circle. 
It is a two-parameter deformation of the discrete Hamiltonian for 
joint moments of the partition function of the O'Connell-Yor semi-discrete polymer. 
We construct the propagation operator by using integral-reflection operators, 
which give a representation of the affine Hecke algebra. 
We also construct eigenfunctions by means of the Bethe ansatz method. 
In the case where one parameter of our Hamiltonian is equal to zero,   
the eigenfunctions are given by specializations of the Hall-Littlewood polynomials.  
\end{abstract}
\maketitle

{\small
\noindent{\it Key words.}
affine Hecke algebra, delta Bose gas. \\
\noindent{\it 2010 Math. Subj. Class.} 
39A12, 81R12}
\medskip 

\setcounter{section}{0}
\setcounter{equation}{0}


\section{Introduction}

In this paper we study a discrete analogue of the Hamiltonian of the non-ideal Bose gas with 
delta-potential interactions, which we call the delta Bose gas for short. 
The eigenvalue problem for the delta Bose gas with periodic boundary condition 
was solved by Lieb and Liniger by means of the Bethe ansatz method \cite{LL}. 
The Hamiltonian of the system is given by 
\begin{align}\label{eq:delta-bose}
{}-\Delta+\sum_{1 \le i<j \le k \atop m \in \mathbb{Z}}\delta(x_{i}-x_{j}+m).  
\end{align}
The potential is supported by affine hyperplanes associated 
with the affine root system of type $A_{k-1}^{(1)}$. 
Gutkin and Sutherland generalized the periodic delta Bose gas for all affine root systems \cite{GS}. 

In \cite{HO} Heckman and Opdam studied the root system generalization 
of the delta Bose gas on a line, 
and revealed a connection with harmonic analysis on 
homogeneous spaces of semisimple groups.  
A key observation is that the propagation operators give 
a representation of the degenerate (or graded) Hecke algebra \cite{D, L}. 
For the periodic case, Emsiz, Opdam and Stokman \cite{EOS} 
found that the underlying symmetry is governed by 
Cherednik's degenerate double affine Hecke algebra \cite{C}. 

In this paper we consider a discrete analogue of the Hamiltonian \eqref{eq:delta-bose}. 
An integrable discretization of the delta Bose-gas associated with 
root systems has been already proposed and studied by van Diejen \cite{vD, vD2}. 
It consists of discrete Laplace operators and 
boundary conditions for functions on a fundamental domain 
of the affine Weyl group in the weight lattice. 
The associated eigenvalue problem can be solved by means of 
Macdonald's spherical functions. 

The discrete version which we will consider is different from van Diejen's model 
and has an origin in the study of integrable stochastic models. 
The Kardar-Parisi-Zhang (KPZ) equation is a stochastic partial differential equation 
for height function $\mathcal{H}$ of growing interfaces (see the review \cite{Cor} for details). 
The Cole-Hopf solution $\mathcal{Z}:=\exp{(-\mathcal{H})}$ satisfies the stochastic heat equation. 
The fact is that the $n$-th moment of $\mathcal{Z}$ satisfies an evolution equation 
with the Hamiltonian of the delta Bose gas on a line with $n$ particles. 
An integrable discretization of the KPZ equation is 
the $q$-deformed totally asymmetric simple exclusion process ($q$-TASEP) 
(see \cite{BC}, Section 3.3.2). 
In a scaling limit $q$-TASEP goes to the O'Connell-Yor semi-discrete directed polymer \cite{OY}. 
The joint moment 
$\tilde{v}(\tau; \overrightarrow{n}) \, 
(\tau \in \mathbb{R}_{>0}, \overrightarrow{n} \in (\mathbb{Z}_{\ge 0})^{k})$ of its partition function 
satisfies the following evolution equation \cite{BCS}: 
\begin{align}\label{eq:OC-Y}
\frac{d}{d\tau}\tilde{v}(\tau; \overrightarrow{n})=\tilde{H}\tilde{v}(\tau; \overrightarrow{n}), 
\quad 
\tilde{H}:=\sum_{i=1}^{k}\nabla_{i}+\sum_{1 \le i<j \le k}\delta_{n_{i}, n_{j}},  
\end{align}
where $\nabla_{i}$ is the difference operator 
\begin{align*}
(\nabla_{i} f)(\overrightarrow{n}):=
f(n_{1}, \ldots , n_{i}-1, \ldots , n_{k})-f(n_{1}, \ldots , n_{i}, \ldots , n_{k}). 
\end{align*}

In this paper we consider a two-parameter deformation of 
the Hamiltonian $\tilde{H}$ with periodic boundary condition. 
It acts on the space of $\mathbb{C}$-valued functions on the $k$-dimensional orthogonal lattice 
$X=\oplus_{i=1}^{k}\mathbb{Z}v_{i}$ 
and is given by 
\begin{align*}
H:=\sum_{i=1}^{k}\beta^{d_{i}^{-}}(t_{v_{i}}-\alpha d_{i}^{+}),   
\end{align*}
where $\alpha \in \mathbb{C}$ and $\beta \in \mathbb{C}^{\times}$ are parameters, 
and $t_{v_{i}}$ is the shift operator $(t_{v_{i}}f)(x):=f(x-v_{i})$. 
The functions $d_{i}^{\pm}$ count positive roots of type $A_{k-1}^{(1)}$ 
whose values at $x$ are non-positive multiple of the system size 
(see \eqref{eq:def-d1} and \eqref{eq:def-d2} below). 
Setting $\beta=1$ and $\alpha=-1$ we recover $\tilde{H}$ 
with periodic boundary condition up to an additive constant.   

The main result of this paper is construction of the propagation operator $G$ 
which sends an eigenfunction of ``half Laplacian'' $\sum_{i=1}^{k}t_{v_{i}}$ 
to that of the Hamiltonian $H$ with the same eigenvalue (see Theorem \ref{thm:main} below). 
To define $G$ we make use of a discrete analogue of the integral-reflection operators 
due to Yang \cite{Y} for the case of type $A$ and Gutkin \cite{G} for the general case. 
Van Diejen and Emsiz \cite{ED} constructed the discrete version from 
a polynomial representation of the affine Hecke algebra. 
We follow their construction  but start from more general divided difference operators 
satisfying the braid relations, 
which are classified by Lascoux and Sch\"utzenberger \cite{LS}. 
Then our integral-reflection operators also give a representation of 
the affine Hecke algebra of type $GL_{k}$. 

The propagation operator enables us to 
construct symmetric and periodic eigenfunctions 
for the Hamiltonian $H$ by means of the Bethe ansatz method, 
which we call Bethe wave functions.   
In the case of $\alpha=0$ the Bethe wave functions 
are given by specializations of the Hall-Littlewood polynomials \cite{Mac} 
(see \eqref{eq:HL} below). 

The paper is organized as follows. 
In Section \ref{sec:pre} we prepare some notation and lemmas about 
the affine root system of type $A_{k-1}^{(1)}$. 
We define the operator $H$ in Section \ref{sec:hamiltonian} 
and the associated integral-reflection operators in Section \ref{sec:refl}. 
In Section \ref{sec:main} we prove the main theorem. 
The Bethe wave functions are constructed in Section \ref{sec:bethe}.


\section{Preliminaries}\label{sec:pre}

Throughout this paper we fix two integers $k \ge 2$ and $L \ge 1$. 
Let $V$ be the $k$-dimensional Euclidean space with inner product $\langle \cdot , \cdot \rangle$, 
and $V^{*}$ the linear dual of $V$. 
We also write $\langle \cdot , \cdot \rangle$ for the associated inner product on $V^{*}$. 
Fix an orthogonal basis $\{v_{i}\}_{i=1}^{k}$ of $V$, and set 
\begin{align*}
X:=\oplus_{i=1}^{k}\mathbb{Z}v_{i}.  
\end{align*}
We denote by $\{\epsilon_{i}\}_{i=1}^{k}$ the dual basis corresponding to $\{v_{i}\}_{i=1}^{k}$. 

For $\xi \in V^{*}\setminus\{0\}$ we define the co-vector $\xi^{\vee} \in V$ by 
the property 
\begin{align*}
\eta(\xi^{\vee})=2\frac{\langle \eta, \xi \rangle}{\langle \xi, \xi \rangle} \qquad 
(\forall{\eta} \in V^{*}).  
\end{align*}

Let ${\rm Aff}(V):=V^{*} \oplus \mathbb{R}\delta$ be 
the space of affine linear functions on $V$, where 
$\delta(v)=1$ for all $v \in V$. 
Denote the gradient map by $D \, : {\rm Aff}(V) \twoheadrightarrow V^{*}$. 

For $\phi \in {\rm Aff}(V)$, the orthogonal reflection 
$s_{\phi} \, : V \to V$ with respect to 
the affine hyperplane $V_{\phi}:=\{v \in V | \phi(v)=0\}$ is given by  
\begin{align*}
s_{\phi}(v):=v-\phi(v)(D\phi)^{\vee}.  
\end{align*}
Define the translation map $t_{v'} \, : V \to V$ for $v' \in V$ by 
\begin{align*}
t_{v'}(v):=v+v'.  
\end{align*}
We also denote $s_{\phi}$ and $t_{v}$ for the corresponding transpositions 
acting on the space of functions on $V$, that is, 
$(s_{\phi}f)(v):=f(s_{\phi}(v))$ and $(t_{v'}f)(v):=f(t_{-v'}(v))$. 

Set $\alpha_{ij}:=\epsilon_{i}-\epsilon_{j}$ for $1\le i, j \le k$. 
The subset $R_{0}:=\{\alpha_{ij} \, | \, 1 \le i, j \le k, \, i\not=j\}$ 
of ${\rm Aff}(V)$ forms the root system of type $A_{k-1}$. 
The Weyl group $W_{0}$ is generated by $\{s_{\alpha}\}_{\alpha \in R_{0}}$.  
We regard the set $R:=R_{0}+\mathbb{Z}(L\delta)$ as the affine root system of type $A_{k-1}^{(1)}$ 
with null roots $\mathbb{Z}(L\delta)$. 
The group $W$ generated by $\{s_{a}\}_{a \in R}$ is called 
the {\it affine Weyl group of type $A_{k-1}^{(1)}$}.  
Any element of $W$ is uniquely written in the form  
$wt_{L\beta}$ where $w \in W_{0}$ and 
$\beta$ is an element of the coroot lattice $Q^{\vee}:=\sum_{\alpha \in R_{0}}\mathbb{Z}\alpha^{\vee}$. 
In this sense we have $W=W_{0}\ltimes (LQ^{\vee})$. 
The gradient map $D \, : W \twoheadrightarrow W_{0}$ defined by 
$D(wt_{L\beta})=w$ is a group homomorphism.  

The {\it extended affine Weyl group} $\widehat{W}$ is generated by 
$\{s_{a}\}_{a \in R}$ and $\{t_{Lx}\}_{x \in X}$. 
Set 
\begin{align*}
\pi:=t_{Lv_{1}}s_{1} \cdots s_{k-1}.  
\end{align*}
Then $\widehat{W}$ is generated by $\pi$ and $W_{0}$. 

Set $a_{0}:=-\alpha_{1k}+L\delta$ and 
$a_{i}=\alpha_{i, i+1} \, (1\le i <k)$. 
The set $\{a_{0}, \ldots , a_{k-1}\}$ gives a basis of $R$. 
Denote $R^{\pm}$ for the set of the associated positive and negative roots. 

The length of $w \in W$ is defined by $\ell(w):=\#(R^{+}\cap w^{-1}R^{-})$. 
We abbreviate $s_{a_{i}}$ by $s_{i}$ $(0 \le i <k)$. 
If $w=s_{i_{1}}\cdots s_{i_{r}} \, (0 \le i_{1}, \ldots , i_{r}<k)$ is a reduced expression, 
then $r=\ell(w)$ and 
$R^{+}\cap w^{-1}R^{-}=\{s_{i_{r}}\cdots s_{i_{p+1}}(a_{i_{p}})\}_{p=1}^{r}$. 

For $v \in V$, set 
\begin{align*}
I(v):=\{ a \in R^{+} \, | \, a(v)<0\}.  
\end{align*}
For $0 \le i <k$ and $v \in V$, we have 
$\#I(s_{i}v)=\#I(v)-1$ if and only if $a_{i}(v)<0$,  
and then $I(s_{i}v)=s_{i}(I(v)\setminus\{a_{i}\})$. 
Note that $I(v)=\emptyset$ if and only if $v$ belongs to 
the closure of the fundamental chamber 
\begin{align*}
\overline{C_{+}}:=\{v \in V \, | \, a_{i}(v)\ge 0 \,\, (0\le \forall{i}<k)\}.  
\end{align*}
For any $v \in V$, the orbit $Wv$ intersects $\overline{C_{+}}$ at one point.  
Take a shortest element $w \in W$ such that $wv \in \overline{C_{+}}$. 
Then $I(v)=R^{+}\cap w^{-1}R^{-}$, and hence   
the shortest element is uniquely determined for each $v \in V$. 
Denote it by $w_{v}$. 

\begin{lem}\label{lem:key}
Suppose that $I(v_{1}) \subset I(v_{2})$. 
Then $w_{v_{2}}=w_{w_{v_{1}}v_{2}}w_{v_{1}}$ and $\ell(w_{v_{2}})=\ell(w_{w_{v_{1}}v_{2}})+\ell(w_{v_{1}})$. 
\end{lem}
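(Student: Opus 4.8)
The plan is to compute the inversion set $I(w_{v_1}v_2)$ explicitly and to read off both the length identity and the factorization from it. Write $w_1:=w_{v_1}$. The basic inputs are the facts recorded just before the lemma: $I(v)=R^+\cap w_v^{-1}R^-$, so that $\ell(w_v)=\#\bigl(R^+\cap w_v^{-1}R^-\bigr)=\#I(v)$, together with the uniqueness of the shortest element $w_v$ sending $v$ into $\overline{C_+}$.

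First I would translate the condition defining $I(w_1v_2)$ through $w_1$. Using the compatibility $a(w_1v_2)=(w_1^{-1}a)(v_2)$ and substituting $a=w_1 b$, membership $a\in I(w_1v_2)$ becomes: $w_1 b\in R^+$ and $b(v_2)<0$. I would then split on the sign of $b$. If $b\in R^+$, then $b(v_2)<0$ says $b\in I(v_2)$, while $w_1 b\in R^+$ (with $b\in R^+$) says exactly $b\notin I(v_1)$, since $I(v_1)=\{b\in R^+\mid w_1 b\in R^-\}$; so these $b$ run precisely over $I(v_2)\setminus I(v_1)$. If instead $b\in R^-$, set $c:=-b\in R^+$; then $b(v_2)<0$ gives $c(v_2)>0$, hence $c\notin I(v_2)$, whereas $w_1 b\in R^+$ gives $w_1 c\in R^-$, i.e.\ $c\in I(v_1)$. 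This is where the hypothesis enters: $I(v_1)\subset I(v_2)$ makes $c\in I(v_1)$ and $c\notin I(v_2)$ incompatible, so this subcase is empty. Hence $I(w_1v_2)=\{w_1 b\mid b\in I(v_2)\setminus I(v_1)\}$, and since $w_1$ is a bijection of $R$ this parametrization is injective.

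Counting then gives $\ell(u)=\#I(w_1v_2)=\#\bigl(I(v_2)\setminus I(v_1)\bigr)=\#I(v_2)-\#I(v_1)=\ell(w_{v_2})-\ell(w_{v_1})$, where $u:=w_{w_1v_2}$ and the penultimate equality uses $I(v_1)\subset I(v_2)$. Rearranging is exactly the asserted length identity $\ell(w_{v_2})=\ell(w_{w_1v_2})+\ell(w_{v_1})$.

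Finally I would upgrade this to the factorization. By definition $u(w_1v_2)\in\overline{C_+}$, so the element $uw_1$ carries $v_2$ into the fundamental chamber; minimality of $w_{v_2}$ forces $\ell(w_{v_2})\le\ell(uw_1)\le\ell(u)+\ell(w_1)=\ell(w_{v_2})$. Thus equality holds throughout, $uw_1$ is a shortest element sending $v_2$ to $\overline{C_+}$, and uniqueness of such an element yields $w_{v_2}=uw_1=w_{w_1v_2}w_{v_1}$. The only real subtlety is the sign bookkeeping in identifying $I(w_1v_2)$ and in seeing that the hypothesis is precisely what kills the $b\in R^-$ contribution; once that computation is in place, everything else is a short formal argument with lengths and the uniqueness of $w_v$.
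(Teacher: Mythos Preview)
Your argument is correct. Both your proof and the paper's aim at the same key identity, namely $I(v_{2})=I(v_{1})\sqcup w_{v_{1}}^{-1}I(w_{v_{1}}v_{2})$ (equivalently $I(w_{v_{1}}v_{2})=w_{v_{1}}\bigl(I(v_{2})\setminus I(v_{1})\bigr)$), and then read off the length equality and the factorization from it via uniqueness of $w_{v_{2}}$. The difference is only in how that identity is obtained: the paper picks a reduced expression $w_{v_{1}}=s_{i_{1}}\cdots s_{i_{r}}$ and peels off one simple reflection at a time, using the recursion $I(s_{i}v)=s_{i}(I(v)\setminus\{a_{i}\})$ when $a_{i}(v)<0$, whereas you compute $I(w_{v_{1}}v_{2})$ in one shot by substituting $a=w_{v_{1}}b$ and splitting on the sign of $b$. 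Your global bijection argument is slightly more direct and avoids choosing a reduced word; the paper's inductive version makes the role of the hypothesis $I(v_{1})\subset I(v_{2})$ visible step by step (each $a_{i_{p}}$ lies in $I(v_{1})$, hence in $I(v_{2})$). Either way the content is the same.
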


\begin{proof}
Let $w_{v_{1}}=s_{i_{1}} \cdots s_{i_{r}}$ be a reduced expression. 
Then
\begin{align*}
I(s_{i_{p}}\cdots s_{i_{r}}v_{2})=s_{i_{p}}(I(s_{i_{p+1}}\cdots s_{i_{r}}v_{2})\setminus\{a_{i_{p}}\})  
\end{align*}
for $1 \le p \le r$ because $I(v_{1}) \subset I(v_{2})$. 
Therefore $I(v_{2})=I(v_{1}) \sqcup w_{v_{1}}^{-1}I(w_{v_{1}}v_{2})$ 
and $\ell(w_{v_{2}})=\ell(w_{w_{v_{1}}v_{2}})+\ell(w_{v_{1}})$.    
Since $w_{w_{v_{1}}v_{2}}w_{v_{1}}$ moves $v_{2}$ into $\overline{C_{+}}$, 
it is equal to $w_{v_{2}}$. 
\end{proof}


\section{Definition of Hamiltonian}\label{sec:hamiltonian}

Denote the $\mathbb{C}$-vector space of $\mathbb{C}$-valued functions on $X$ by $F(X)$ . 
For $1 \le i \le k$, define $d^{\pm}_{i} \in F(X)$ by 
\begin{align}
d^{+}_{i}(x)&:=\#\{1 \le p <k \, | \, \sum_{j=i}^{i+p-1}a_{j}(x) \in L\mathbb{Z}_{\le 0}\}, 
\label{eq:def-d1} \\ 
d^{-}_{i}(x)&:=\#\{1 \le p <k \, | \, \sum_{j=i-p}^{i-1}a_{j}(x) \in L\mathbb{Z}_{\le 0}\}, 
\label{eq:def-d2} 
\end{align}
where the index $j$ of simple root $a_{j}$ is read modulo $k$. 
These functions have the following property: 

\begin{prop}\label{prop:d-change}
For $x \in X, 1 \le i \le k $ and $0 \le j<k$, we have 
\begin{align*}
d_{i}^{\pm}(s_{j}x)=\left\{ 
\begin{array}{ll}
d_{i}^{\pm}(x) & (i\not=j, j+1),\\
d_{j+1}^{\pm}(x)\pm\theta(a_{j}(x)=0) & (i=j),\\
d_{j}^{\pm}(x)\mp\theta(a_{j}(x)=0) & (i=j+1),\\
\end{array}
\right. 
\end{align*}
where $\theta(P)=1$ or $0$ if $P$ is true or false, respectively. 
\end{prop}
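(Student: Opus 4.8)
The plan is to reduce the entire statement to one elementary count by passing to ``unfolded'' coordinates. Introduce the quasi-periodic sequence $\tilde{x}_l$ ($l\in\mathbb{Z}$) determined by $\tilde{x}_l=x_l$ for $1\le l\le k$ together with $\tilde{x}_{l+k}=\tilde{x}_l-L$. The single identity that drives everything is
\[
a_l(x)=\tilde{x}_l-\tilde{x}_{l+1}\qquad(\,l\in\mathbb{Z},\ \text{indices of }a\text{ read mod }k\,),
\]
which is immediate for $1\le l<k$ and, for $l\equiv0$, reads $a_0(x)=x_k-x_1+L=\tilde{x}_0-\tilde{x}_1$; it is consistent with the quasi-periodicity since $\tilde{x}_{l+k}-\tilde{x}_{l+k+1}=\tilde{x}_l-\tilde{x}_{l+1}$. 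Telescoping then gives $\sum_{j=i}^{i+p-1}a_j(x)=\tilde{x}_i-\tilde{x}_{i+p}$ and $\sum_{j=i-p}^{i-1}a_j(x)=\tilde{x}_{i-p}-\tilde{x}_i$, so that, rewriting ``$\in L\mathbb{Z}_{\le0}$'' as ``$\ge$ with equal residue mod $L$'',
\[
d_i^{+}(x)=\#\{\,1\le p<k \mid \tilde{x}_{i+p}\ge\tilde{x}_i \text{ and } \tilde{x}_{i+p}\equiv\tilde{x}_i \pmod{L}\,\},
\]
and symmetrically $d_i^{-}(x)$ counts $p$ with $\tilde{x}_{i-p}\le\tilde{x}_i$ and equal residue. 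Establishing these formulas is routine bookkeeping, but it is what makes the rest transparent.

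Next I would record the action of $s_j$ in these coordinates. For $1\le j<k$ it swaps $x_j\leftrightarrow x_{j+1}$, while $s_0$ sends $x_1\mapsto x_k+L$ and $x_k\mapsto x_1-L$; in both cases one checks that on the unfolded sequence this is the single permutation $\sigma_j$ exchanging $\tilde{x}_l\leftrightarrow\tilde{x}_{l+1}$ for every $l\equiv j\pmod{k}$, uniformly for $0\le j<k$. Consequently $d_i^{\pm}(s_jx)$ is the same count as above but with reference index $\sigma_j(i)$ and summation window $\sigma_j(\{i+1,\dots,i+k-1\})$ (respectively the reversed window). Moreover the identity above gives $a_j(x)=0\iff\tilde{x}_j=\tilde{x}_{j+1}$, which is the event the correction term detects.

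For $i\ne j,j+1$ the index $i$ is fixed by $\sigma_j$, and the window $\{i+1,\dots,i+k-1\}$ contains each residue class except $i$ exactly once. Since $i\not\equiv j,j+1$, the two swapped classes $j,j+1$ both lie strictly inside the window, so $\sigma_j$ merely permutes the window; the count is unchanged, giving $d_i^{\pm}(s_jx)=d_i^{\pm}(x)$.

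The content lies in $i=j$ and $i=j+1$, where $\sigma_j$ moves the reference index and, crucially, pushes exactly one endpoint of the window across the cyclic seam, so that one index leaves and one enters. Comparing $d_j^{+}(s_jx)$ with $d_{j+1}^{+}(x)$ (both now referenced to $\tilde{x}_{j+1}$), the counts agree except for the entering index $j$ versus the leaving index $j+k$; using $\tilde{x}_{j+k}=\tilde{x}_j-L$, the net change under the common-residue condition is $\theta(\tilde{x}_j\ge\tilde{x}_{j+1})-\theta(\tilde{x}_j\ge\tilde{x}_{j+1}+L)$, which equals $1$ precisely when $\tilde{x}_j=\tilde{x}_{j+1}$, i.e.\ when $a_j(x)=0$. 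The remaining three identities ($d_j^-$, and the two for $i=j+1$) follow the same pattern with the analogous boundary index, the sign being fixed by whether that index sits above or below the reference value. I expect the main obstacle to be the careful bookkeeping here: pinning down which single index enters and which leaves under $\sigma_j$, carrying the residue-mod-$L$ constraint throughout, and invoking $\tilde{x}_{l\pm k}=\tilde{x}_l\mp L$ to collapse the difference of two indicators to a single $\theta(a_j(x)=0)$ with the correct sign.
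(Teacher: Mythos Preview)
Your approach is correct. The paper states this proposition without proof, treating it as a direct verification from the definitions, so there is nothing to compare against; your unfolded-coordinate reformulation is a clean way to organize that verification, and the boundary bookkeeping you outline (one index entering, one leaving, the shift $\tilde{x}_{l\pm k}=\tilde{x}_l\mp L$ collapsing the indicator difference to $\theta(a_j(x)=0)$ under the residue constraint) goes through exactly as you describe.

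One small point of precision worth tightening when you write it out: in the sentence ``the net change under the common-residue condition is $\theta(\tilde{x}_j\ge\tilde{x}_{j+1})-\theta(\tilde{x}_j\ge\tilde{x}_{j+1}+L)$, which equals $1$ precisely when $\tilde{x}_j=\tilde{x}_{j+1}$'', make explicit that the residue condition is being carried along (otherwise the indicator difference is $1$ on the whole interval $0\le\tilde{x}_j-\tilde{x}_{j+1}<L$, not just at the single point). You clearly intend this, but stating it as ``the difference is $\theta(0\le\tilde{x}_j-\tilde{x}_{j+1}<L)\cdot\theta(L\mid\tilde{x}_j-\tilde{x}_{j+1})=\theta(\tilde{x}_j=\tilde{x}_{j+1})$'' removes any ambiguity. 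The remaining three boundary cases are, as you say, entirely parallel.
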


Now we define the operator $H$ acting on $F(X)$ by 
\begin{align*}
H:=\sum_{i=1}^{k}\beta^{d_{i}^{-}}(t_{v_{i}}-\alpha d_{i}^{+}),  
\end{align*}
where $\alpha \in \mathbb{C}$ and $\beta \in \mathbb{C}^{\times}$ are constants. 
Setting $\alpha=-1$ and $\beta=1$, we have 
\begin{align*}
H=\sum_{i=1}^{k}t_{v_{i}}+\sum_{1 \le i<j \le k \atop m \in \mathbb{Z}}
\theta(\alpha_{ij}(\cdot)+mL=0).  
\end{align*}
It gives the discrete Hamiltonian $\tilde{H}$ \eqref{eq:OC-Y} 
with periodic boundary condition of size $L$ up to an additive constant.  

The operator $H$ is $W$-invariant in the following sense. 
Set 
\begin{align*}
X_{\rm reg}:=X-\bigcup_{a \in R^{+}}V_{a}. 
\end{align*}

\begin{prop}
For $f \in F(X)$ and $w \in W$, it holds that $wHw^{-1}f=Hf$ on $X_{\rm reg}$. 
\end{prop}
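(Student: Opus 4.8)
The plan is to reduce to the simple reflections and verify the identity for each of them using Proposition \ref{prop:d-change}. Since $W$ is generated by $s_0, \ldots, s_{k-1}$ and conjugation $A \mapsto wAw^{-1}$ is multiplicative, it is convenient to introduce the defect operator $E_w := wHw^{-1}-H$ on $F(X)$ and to show that $E_w$ is supported on the union of root hyperplanes $\bigcup_{a \in R^{+}}V_a$. The statement then follows at once, since $X_{\rm reg}$ is by definition the complement of this union, so $E_w f=0$ there.

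First I would record the two conjugation rules that drive the computation. Because each $s_j$ acts on $V$ as an affine involution with linear part $\bar{s}_j$, a short computation with the convention $(t_{v'}f)(v)=f(v-v')$ gives $s_j t_{v_i} s_j = t_{\bar{s}_j(v_i)} = t_{v_{\tau_j(i)}}$, where $\tau_j$ is the transposition $(j,j+1)$ of indices read modulo $k$ (so $\tau_0=(1,k)$). Similarly, conjugating the multiplication operator attached to $d_i^{\pm}$ produces multiplication by $x \mapsto d_i^{\pm}(s_j x)$, and Proposition \ref{prop:d-change} expresses this as $d_{\tau_j(i)}^{\pm}$ plus a correction proportional to $\theta(a_j(x)=0)$. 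Both rules are elementary; the only care needed is the mod-$k$ bookkeeping when $j=0$.

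Next I would compute $E_{s_j}=s_j H s_j - H$ for a fixed generator (using $s_j^{-1}=s_j$). Substituting the two rules into $H=\sum_i \beta^{d_i^{-}}(t_{v_i}-\alpha d_i^{+})$, evaluating at a point $x$ with $a_j(x)\neq 0$ makes the $\theta$-corrections vanish, so that $d_i^{\pm}(s_j x)=d_{\tau_j(i)}^{\pm}(x)$ exactly; together with $\bar{s}_j(v_i)=v_{\tau_j(i)}$ and $s_j s_j x = x$ one obtains $(s_j H s_j f)(x)=\sum_i \beta^{d_{\tau_j(i)}^{-}(x)}\bigl[f(x-v_{\tau_j(i)})-\alpha d_{\tau_j(i)}^{+}(x)f(x)\bigr]$. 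Reindexing the sum by the involution $\tau_j$ reproduces $(Hf)(x)$ exactly, because $\tau_j$ simultaneously permutes the shift directions $v_i$ and the indices of $d_i^{\pm}$. Hence $E_{s_j}$ vanishes off $V_{a_j}$: every surviving term carries a factor $\theta(a_j(x)=0)$, so $E_{s_j}$ is supported on the single wall $V_{a_j}$, and in particular on $\bigcup_{a\in R^{+}}V_a$.

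Finally I would pass from generators to arbitrary $w\in W$ by induction on the length, using the cocycle relation $E_{w_1 w_2}=E_{w_1}+w_1 E_{w_2}w_1^{-1}$. Since $W$ permutes $R$ and hence the root hyperplanes, conjugating an operator supported on $\bigcup_{a\in R^{+}}V_a$ by $w_1$ again yields an operator supported there; combined with the base case this shows $E_w$ is supported on $\bigcup_{a\in R^{+}}V_a$ for every $w$, whence $E_w f=0$ on $X_{\rm reg}$. I expect the generator computation — correctly matching the reindexed shift terms against the $d_i^{\pm}$-corrections of Proposition \ref{prop:d-change} — to be the main obstacle, whereas the reduction step is a routine support-tracking argument that crucially uses the $W$-invariance of the union of root hyperplanes.
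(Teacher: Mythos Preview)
Your proof is correct and follows essentially the same approach as the paper: both rely on Proposition~\ref{prop:d-change} together with the vanishing of the $\theta$-corrections on $X_{\rm reg}$, the identity $w t_{v_i} w^{-1}=t_{(Dw)v_i}$, and a reindexing of the sum defining $H$. The only organizational difference is that the paper carries this out directly for an arbitrary $w\in W$ (defining $\mu$ by $(Dw)v_i=v_{\mu(i)}$ and invoking Proposition~\ref{prop:d-change} to get $d_i^{\pm}(w^{-1}x)=d_{\mu(i)}^{\pm}(x)$), whereas you make the passage from generators to general $w$ explicit via the defect operator $E_w$ and its cocycle relation.
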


\begin{proof}
Take a point $x \in X_{\rm reg}$ and $w \in W$. 
Define $\mu \in \mathfrak{S}_{k}$ by $(Dw)(v_{i})=v_{\mu(i)} \, (1 \le i \le k)$, then 
Proposition \ref{prop:d-change} implies that 
$d_{i}^{\pm}(w^{-1}x)=d_{\mu(i)}^{\pm}(x)$. 
Hence we have 
\begin{align*}
(wHw^{-1}f)(x)=\sum_{i=1}^{k}\beta^{d_{\mu(i)}^{-}(x)}
( f(wt_{-v_{i}}w^{-1}x)-\alpha d_{\mu(i)}^{+}(x)f(x)  ).  
\end{align*}
Since $wt_{-v_{i}}w^{-1}=t_{-(Dw)(v_{i})}=t_{-v_{\mu(i)}}$, 
the right hand side is equal to $(Hf)(x)$. 
\end{proof}


\section{Integral-reflection operators}\label{sec:refl}

\subsection{Affine Hecke algebra}

\begin{definition}
The {\it affine Hecke algebra $\widehat{\mathcal{H}}$ of type $GL_{k}$} is 
the unital associative algebra with generators 
$T_{i} \, (1 \le i<k)$ and $Y_{i} \, (1 \le i \le k)$ satisfying
\begin{align}
& 
(T_{i}-1)(T_{i}+\beta)=0 \quad (1 \le i<k), 
\label{eq:quad} \\ 
& 
T_{i}T_{i+1}T_{i}=T_{i+1}T_{i}T_{i+1} \quad (1 \le i \le k-2), \quad 
T_{i}T_{j}=T_{j}T_{i} \quad (|i-j|>1), 
\label{eq:braid} \\ 
& 
Y_{i}Y_{j}=Y_{j}Y_{i} \quad (1 \le i, j \le k), 
\nonumber \\ 
& 
Y_{i}T_{j}=T_{j}Y_{i} \quad (j \not=i-1, i), \quad  
T_{i}Y_{i+1}T_{i}=Y_{i} \quad (1 \le i<k). 
\nonumber 
\end{align}
\end{definition}

Set $\omega:=Y_{k}T_{k-1}\cdots T_{1}$. Then $\omega T_{i}=T_{i-1} \omega \, (1<i<k)$ and 
$\omega^{2}T_{1}=T_{k-1}\omega^{2}$. 
The subalgebra $\mathcal{H}$ generated by $T_{i} \, (1 \le i<k)$ and $T_{0}:=\omega T_{1} \omega^{-1}$ 
is called the {\it affine Hecke algebra of type $A_{k-1}^{(1)}$}. 

\subsection{Integral-reflection operators}

Let us construct the integral-reflection operators acting on $F(X)$ from 
divided difference operators acting on polynomial ring following \cite{ED}. 

We identify the group algebra $\mathbb{C}[X]$ with 
the Laurent polynomial ring $\mathbb{C}[e^{\pm v_{1}}, \ldots , e^{\pm v_{k}}]$. 
The extended affine Weyl group acts on $\mathbb{C}[X]$ by 
$w(e^{x}):=e^{wx} \, (w \in \widehat{W}, x \in X)$.  
Consider the operator $\check{T}_{i} \, (1 \le i<k)$ acting on $\mathbb{C}[X]$ defined by 
\begin{align*}
\check{T}_{i}:=s_{i}+\frac{\alpha e^{v_{i+1}}+1-\beta}{1-e^{-v_{i}+v_{i+1}}}(1-s_{i}).  
\end{align*}
In \cite{LS} Lascoux and Sch\"utzenberger characterized 
divided difference operators acting on polynomial ring 
which satisfy the braid relation. 
They are parameterized by four parameters. 
The operators $\check{T}_{i} \, (1 \le i<k)$ are obtained 
by setting one of the parameters to zero, 
and they satisfy the quadratic relation \eqref{eq:quad} and the braid relation \eqref{eq:braid}. 

The group algebra $\mathbb{C}[X]$ acts on $F(X)$ by $e^{x}f:=t_{-x}f$ 
\footnote{This action is different from that in \cite{ED} where $e^{x}f:=t_{x}f$.}. 
Define a non-degenerate bilinear pairing $F(X) \times \mathbb{C}[X] \to \mathbb{C}$ by 
$(f, p):=(pf)(0)$.  
We consider the operator $Q_{i} \, (1 \le i<k)$ acting on $F(X)$ determined by 
$(Q_{i}f, p)=(f, \check{T}_{i}p) \, (\forall{p} \in \mathbb{C}[X])$.  
An explicit formula for $Q_{i}$ is given as follows. 

\begin{definition}
The {\it integral-reflection operator} $Q_{i} \, (1 \le i<k)$ acting on $F(X)$ is defined by 
\begin{align*}
(Q_{i}f)(x):=\left\{ 
\begin{array}{ll}
\displaystyle 
f(s_{i}x)+\sum_{j=1}^{a_{i}(x)}\left( \alpha f(s_{i}x+ja_{i}^{\vee}+v_{i+1})+
(1-\beta) f(s_{i}x+ja_{i}^{\vee})\right) & 
(a_{i}(x)>0), \\ 
f(x) & (a_{i}(x)=0), \\
\displaystyle 
f(s_{i}x)-\!\!\!\sum_{j=0}^{-a_{i}(x)-1}\!\!\! 
\left( \alpha f(s_{i}x-ja_{i}^{\vee}+v_{i+1})+(1-\beta) f(s_{i}x-ja_{i}^{\vee})\right) &  
(a_{i}(x)<0). 
\end{array}
\right.
\end{align*} 
\end{definition}

The operators $Q_{i} \, (1\le i<k)$ satisfy the same quadratic relations and braid ones as 
$\check{T}_{i} \, (1 \le i<k)$ because the relations are left-right symmetric. 
Denote by $\check{\pi}$ the action of $\pi \in \widehat{W}$ on $\mathbb{C}[X]$. 
It satisfies $\check{T}_{i}\check{\pi}=\check{\pi}\check{T}_{i-1} \, (1<i<k)$ and 
$\check{T}_{1}\check{\pi}^{2}=\check{\pi}^{2}\check{T}_{k-1}$. 
{}From $(\pi^{-1}f, p)=(f, \check{\pi}p)$,
we find that $Q_{i} \, (1 \le i<k)$ and $\pi$ give 
$\widehat{\mathcal{H}}$-module structure on $F(X)$: 

\begin{prop}\label{prop:H-module}
The assignment $T_{i} \mapsto Q_{i} \, (1 \le i <k)$ and $\omega \mapsto \pi^{-1}$ 
extends uniquely to a representation $\rho \, : \, \widehat{\mathcal{H}} \to {\rm End}_{\mathbb{C}}F(X)$ 
of the affine Hecke algebra of type $GL_{k}$. 
\end{prop}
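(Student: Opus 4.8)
The plan is to establish the three families of defining relations of $\widehat{\mathcal{H}}$ for the proposed images, leveraging the duality between $Q_i$ and $\check T_i$ as much as possible. First I would record the key structural fact: the pairing $(f,p):=(pf)(0)$ is non-degenerate, and the two actions of $\widehat{W}$ (one on $\mathbb{C}[X]$ by $w(e^x)=e^{wx}$, one on $F(X)$ by $e^xf=t_{-x}f$) are mutually adjoint in the sense that $(wf,p)=(f,w^{-1}p)$ for $w\in\widehat{W}$, and more generally $(Q_if,p)=(f,\check T_ip)$ and $(\pi^{-1}f,p)=(f,\check\pi p)$ by construction. Once adjointness is set up, any polynomial identity among the $\check T_i$ and $\check\pi$ transfers to the adjoint identity among the $Q_i$ and $\pi$, but with the order of factors \emph{reversed}. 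This is exactly why the quadratic and braid relations pass over unchanged: both are invariant under reversing the order of the operators (the quadratic relation involves a single generator, and the braid relations $\check T_i\check T_{i+1}\check T_i=\check T_{i+1}\check T_i\check T_{i+1}$ and $\check T_i\check T_j=\check T_j\check T_i$ read the same backwards). So the verification that $T_i\mapsto Q_i$ respects \eqref{eq:quad} and \eqref{eq:braid} is immediate from the cited result of Lascoux–Sch\"utzenberger that the $\check T_i$ satisfy these relations, together with adjointness.

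The substantive content is the interaction between $\omega\mapsto\pi^{-1}$ and the $T_i$. Here I would translate the two relations recorded in the excerpt for $\check\pi$, namely $\check T_i\check\pi=\check\pi\check T_{i-1}$ for $1<i<k$ and $\check T_1\check\pi^2=\check\pi^2\check T_{k-1}$, into their adjoint forms. Taking adjoints of $\check T_i\check\pi=\check\pi\check T_{i-1}$ and reversing order gives $\pi^{-1}Q_i=Q_{i-1}\pi^{-1}$, i.e. $\pi^{-1}Q_i(\pi^{-1})^{-1}=Q_{i-1}$, which is precisely the relation $\omega T_i=T_{i-1}\omega$ for $1<i<k$ demanded of $\omega=Y_kT_{k-1}\cdots T_1$. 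Similarly the adjoint of $\check T_1\check\pi^2=\check\pi^2\check T_{k-1}$ yields $(\pi^{-1})^2Q_1=Q_{k-1}(\pi^{-1})^2$, matching $\omega^2T_1=T_{k-1}\omega^2$. Since $\pi$ is an invertible element of $\widehat{W}$, its action $\pi^{-1}$ on $F(X)$ is an invertible operator, so these relations are genuine conjugation identities.

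The remaining point is to argue that these verified relations are \emph{sufficient} to extend the assignment to a well-defined algebra homomorphism $\rho$. I would invoke the standard presentation of $\widehat{\mathcal{H}}$ of type $GL_k$ in terms of the $T_i$ together with the single element $\omega$ satisfying exactly $\omega T_i=T_{i-1}\omega$ $(1<i<k)$ and $\omega^2T_1=T_{k-1}\omega^2$, with $\omega$ invertible: this is the well-known Bernstein-type description in which the $Y_i$ are recovered from $\omega$ and the $T_i$ via $\omega=Y_kT_{k-1}\cdots T_1$, so that specifying $\rho$ on the $T_i$ and on $\omega$ determines it on all $Y_i$ and shows uniqueness. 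Thus once the quadratic relation, the braid relations, the two $\omega$-relations, and the invertibility of $\rho(\omega)=\pi^{-1}$ are in hand, the universal property of the presentation furnishes the algebra map, and it is unique because the generators $T_i$ and $\omega$ generate $\widehat{\mathcal{H}}$.

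The main obstacle I anticipate is bookkeeping rather than conceptual: one must be careful that adjunction reverses the order of composition, so that a relation $AB=BA$-type statement on the $\check T$ side maps to the correspondingly reversed statement on the $Q$ side, and in particular that the $\omega$-relations come out with the indices and the side of conjugation exactly as required. A secondary check is verifying that the action of $\mathbb{C}[X]$ on $F(X)$ via $e^xf=t_{-x}f$ (note the sign, differing from \cite{ED}) is genuinely compatible with the stated adjointness $(\pi^{-1}f,p)=(f,\check\pi p)$; confirming this sign convention makes the whole transfer argument go through, whereas the opposite convention would force $\omega\mapsto\pi$ rather than $\pi^{-1}$.
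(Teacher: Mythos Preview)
Your proposal is correct and follows essentially the same route as the paper: transfer the quadratic and braid relations from the $\check T_i$ to the $Q_i$ via the non-degenerate pairing using that these relations are left--right symmetric, and then transfer the two $\check\pi$-relations to the corresponding $\pi^{-1}$-relations using $(\pi^{-1}f,p)=(f,\check\pi p)$. Your added remarks on the $(T_i,\omega)$-presentation and on the sign convention in the $\mathbb{C}[X]$-action on $F(X)$ make explicit what the paper leaves implicit, but there is no substantive difference in strategy.
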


In the rest of this paper we make use of the restriction of $\rho$ to 
the subalgebra $\mathcal{H}$. 


\section{Propagation operator}\label{sec:main}

Set $Q_{0}:=\rho(T_{0})=\pi^{-1}Q_{1}\pi$. 
Let $w=s_{i_{1}} \cdots s_{i_{m}}$ be a reduced expression of $w \in W$ 
and set $Q_{w}:=Q_{i_{1}} \cdots Q_{i_{m}}$. 
It does not depend on the choice of reduced expression of $w$.  

\begin{thm}\label{thm:main}
For $f \in F(X)$, define $G(f) \in F(X)$ by 
\begin{align*}
G(f)(x):=(w_{x}^{-1}Q_{w_{x}}f)(x).  
\end{align*}
If $f$ is an eigenfunction of the operator $\sum_{i=1}^{k}t_{\epsilon_{i}}$ 
with eigenvalue $\lambda \in \mathbb{C}$, 
then $G(f)$ satisfies $HG(f)=\lambda G(f)$. 
\end{thm}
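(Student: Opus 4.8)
The plan is to verify the eigenvalue identity $HG(f)=\lambda G(f)$ pointwise at each $x \in X_{\mathrm{reg}}$ first, and then extend to all of $X$ by a continuity or density argument on the non-regular locus. The key structural fact I would exploit is that the propagation operator $G$ is built so that $G(f)(x)=(w_x^{-1}Q_{w_x}f)(x)$, where $w_x$ is the shortest element sending $x$ into $\overline{C_+}$. My first step is to rewrite $HG(f)(x)$ by inserting the definition of $H$ as $\sum_i \beta^{d_i^-}(t_{v_i}-\alpha d_i^+)$ and trying to intertwine the shift operators $t_{v_i}$ past the reflection data encoded in $Q_{w_x}$. The central device should be the $W$-invariance of $H$ (the Proposition just before Section~\ref{sec:refl}): since $w H w^{-1}f = Hf$ on $X_{\mathrm{reg}}$, I can hope to reduce the computation at a general $x$ to a computation at the fundamental-chamber point $w_x x \in \overline{C_+}$, where the functions $d_i^\pm$ and hence $H$ simplify drastically.

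First I would analyze $H$ restricted to $\overline{C_+}$: at a regular point of the fundamental chamber all the counting functions $d_i^\pm$ that measure roots taking non-positive values should vanish (or take their minimal values), so $H$ collapses essentially to the bare half-Laplacian $\sum_i t_{v_i}$ up to the boundary corrections. Since $f$ is assumed to be an eigenfunction of $\sum_i t_{\epsilon_i}$ with eigenvalue $\lambda$, and since $G(f)(x)=f(x)$ when $x \in \overline{C_+}$ (because $w_x=e$ there), the identity $HG(f)=\lambda G(f)$ should hold on the interior of the chamber almost by fiat. The real content is propagating this from the chamber to the whole lattice. For this I would use the recursive/covariance behavior of $G$ under the simple reflections: using Lemma~\ref{lem:key} together with the cocycle-type factorization $Q_{w_{v_2}}=Q_{w_{w_{v_1}v_2}}Q_{w_{v_1}}$ coming from length-additivity, I expect $G$ to satisfy a one-step relation $G(f)(s_i x)=(s_i Q_i^{-1} \cdots)$ type formula relating its value at $x$ to its value at a chamber-closer point $s_i x$ whenever $a_i(x)<0$.

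The mechanism I would actually push through is: write the main identity as $s_i$-covariance of the operator $H$ combined with the Hecke relation $Q_i$ satisfies. Concretely, since $Q_i$ realizes $T_i$ and satisfies $(Q_i-1)(Q_i+\beta)=0$, I would establish a commutation-type lemma of the shape $H Q_i = Q_i H'$ (or $s_i H s_i = H$ on the reflected region) where the intertwiner absorbs exactly the jumps of $d_i^\pm$ described in Proposition~\ref{prop:d-change}. The parameters $\alpha,\beta$ in $H$ are matched precisely to the coefficients $\alpha e^{v_{i+1}}+1-\beta$ appearing in $\check{T}_i$ and hence in $Q_i$, so the $-\alpha d_i^+$ correction term in $H$ and the $\alpha$-term in $Q_i$, as well as the $\beta^{d_i^-}$ weighting and the $(1-\beta)$-term in $Q_i$, should cancel in a telescoping fashion. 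I would carry out the single-reflection base case explicitly: compute both $HG(f)(x)$ and $\lambda G(f)(x)$ for $x$ with exactly one root negative, using the explicit piecewise formula for $Q_i$, and check they agree; the general case then follows by induction on $\ell(w_x)$ using the factorization of $Q_{w_x}$.

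The hard part will be bookkeeping the interaction between the shift operators $t_{v_i}$ in $H$ and the reflection operators $Q_{w_x}$, because applying $t_{v_i}$ moves the point $x$ to $x-v_i$, which generically changes $w_x$ and hence changes which reduced word defines $G$ at the shifted point. Making the intertwining precise requires showing that $t_{v_i}$ followed by the propagation $Q_{w_{x-v_i}}^{-1}\cdots$ equals $Q_{w_x}^{-1}\cdots$ followed by an appropriately shifted operator, and this is exactly where the discrete integral-reflection operators (rather than mere reflections) are needed, since the extra summation terms in $Q_i$ supply the ``boundary'' contributions generated when the shift crosses a wall $V_{a_i}$. I expect the matching of the $\alpha d_i^+$ and $\beta^{d_i^-}$ factors against the precise jump rules in Proposition~\ref{prop:d-change} to be the linchpin: getting every sign and every indicator $\theta(a_j(x)=0)$ to line up so that the telescoping sum collapses to the single eigenvalue $\lambda$ is the delicate computational heart of the argument.
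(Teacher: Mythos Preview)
Your broad intuition (intertwine shifts past the Hecke operators, expect a telescoping cancellation matched to the coefficients $\alpha,1-\beta$) is right, but the concrete strategy has real gaps and misses the device the paper actually uses.

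First, two of your proposed moves do not work in this setting. The ``prove it on $X_{\mathrm{reg}}$ and extend by continuity/density'' step is not available: $X$ is a lattice, there is no topology to invoke, and the non-regular points are exactly where the correction terms in $H$ are switched on. Likewise, a clean operator identity of the form $HQ_i=Q_iH'$ does not exist, because $H$ contains the position-dependent multipliers $\beta^{d_i^-}$ and $d_i^+$; only pointwise identities (after fixing $x$) make sense. Your proposed induction on $\ell(w_x)$ also runs into trouble: computing $(HG(f))(x)$ requires $G(f)(x-v_i)$ for every $i$, and $\ell(w_{x-v_i})$ is \emph{not} in general smaller than $\ell(w_x)$, so there is no monotone quantity to induct on.

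The paper's proof does something more direct. The key technical device you are missing is to introduce the \emph{half-integer} point $x':=x-\tfrac{1}{2}v_i\in\mathbb{Q}\otimes X$: one checks $I(x)\subset I(x')$ and $I(x-v_i)\subset I(x')$, so Lemma~\ref{lem:key} gives two length-additive factorisations $w_{x'}=w_{w_x x'}w_x=w_{w_y x'}w_y$ (with $y=x-v_i$). The factors $w_{w_x x'}$ and $w_{w_y x'}$ turn out to be explicit products of $d_i^+(x)$ and $d_i^-(y)$ consecutive simple reflections respectively, which lets one rewrite $Q_{w_y}$ in terms of $Q_{w_x}$ and a short, explicit word in the $Q_j$'s. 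Then the single commutation relation
\[
t_{v_{j+1}}Q_j=Q_jt_{v_j}+\alpha+(1-\beta)t_{v_{j+1}}
\]
(not an $H$--$Q$ intertwining, but a $t$--$Q$ one) pushes the shift through that short word, producing exactly $\alpha d_i^+(x)+(1-\beta)\sum t_{v_{p+r}}$; summing over $i$ and grouping by the intervals of consecutive walls through $w_x x$ gives the telescoping you anticipated and reduces $(HG(f))(x)$ to $(\sum_i t_{v_i}Q_{w_x}f)(w_x x)$. The $W$-invariance proposition is not actually used in the argument. So: keep your telescoping intuition, drop the induction and continuity, and replace them with the half-step point $x'$ and the $t$--$Q$ commutator.
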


The following lemma plays a key role in the proof of Theorem \ref{thm:main}. 
\begin{lem}\label{lem:main}
Let $f \in F(X)$ and $x \in X$. 
Define $\sigma \in \mathfrak{S}_{k}$ by $(Dw_{x})(v_{i})=v_{\sigma(i)} \, (1\le i \le k)$. 
Then we have 
\begin{align*}
((t_{v_{i}}-\alpha d_{i}^{+})G(f))(x)=
((t_{v_{\sigma(i)}}+(1-\beta)\sum_{j=1}^{d_{i}^{+}(x)}t_{v_{\sigma(i)+j}})Q_{w_{x}}f)(w_{x}x).   
\end{align*} 
for $1 \le i \le k$.  
In the right hand side the index $j$ of $v_{j}$ is read modulo $k$. 
\end{lem}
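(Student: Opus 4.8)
The plan is to prove the identity by induction on $\ell(w_x)$, computing everything explicitly once we are in the closed fundamental chamber and peeling off one simple reflection at a time to descend from the general case. Note that the eigenvalue hypothesis of Theorem~\ref{thm:main} plays no role here: this is a purely operator-theoretic statement about $G$, $t_{v_i}$ and $d_i^+$. The engine of the induction is the recursion $G(f)(x)=G(Q_jf)(s_jx)$, valid whenever $a_j(x)<0$. Indeed, for such $j$ one has $\#I(s_jx)=\#I(x)-1$, hence $w_x=w_{s_jx}s_j$ with $\ell(w_x)=\ell(w_{s_jx})+1$ and $Q_{w_x}=Q_{w_{s_jx}}Q_j$; substituting into the definition of $G$ and using $s_j^{-1}=s_j$ yields the recursion. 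I would combine it with Proposition~\ref{prop:d-change}, observing that since the chosen descent satisfies $a_j(x)<0$ strictly, the correction terms $\pm\theta(a_j(x)=0)$ vanish and the functions $d_\cdot^{\pm}$ simply interchange their indices $j\leftrightarrow j+1$ under $x\mapsto s_jx$.

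For the base case $x\in\overline{C_+}$ we have $w_x=e$, so $\sigma=\mathrm{id}$, $g:=Q_{w_x}f=f$ and $G(f)(x)=f(x)$. Since $a_l(x)\ge 0$ for every simple root, the sum $\sum_{l=i}^{i+p-1}a_l(x)$ lies in $L\mathbb{Z}_{\le 0}$ only when it vanishes, so $d:=d_i^+(x)$ equals the length of the maximal run $a_i(x)=\cdots=a_{i+d-1}(x)=0$ of simple roots vanishing at $x$ (indices mod $k$). Here $I(x)=\emptyset\subset I(x-v_i)$, so I can determine $w_{x-v_i}$ outright: a short check gives $I(x-v_i)=\{\epsilon_i-\epsilon_{i+1},\ldots,\epsilon_i-\epsilon_{i+d}\}$, whence $\ell(w_{x-v_i})=d$ with an explicit reduced word. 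Feeding this into the three-case formula for the $Q_l$ and iterating, each factor contributes one $\alpha f(x)$ term and one $(1-\beta)f(x-v_{i+j})$ term, so that $G(f)(x-v_i)=f(x-v_i)+\alpha d\,f(x)+(1-\beta)\sum_{j=1}^{d}f(x-v_{i+j})$; subtracting $\alpha d_i^+(x)f(x)=\alpha d\,f(x)$ then reproduces the right-hand side.

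For the inductive step I would fix a descent $j$ with $a_j(x)<0$ and reduce the instance at $(x,i,f)$ to the one at $(s_jx,i',Q_jf)$, which has strictly smaller $\ell(w_{s_jx})$. When $i\notin\{j,j+1\}$ and $j\ne 0$, one has $a_j(v_i)=0$ and $s_jv_i=v_i$, so $a_j(x-v_i)<0$ as well; the recursion rewrites both $G(f)(x)=G(Q_jf)(s_jx)$ and $G(f)(x-v_i)=G(Q_jf)(s_jx-v_i)$, and together with the index swap of Proposition~\ref{prop:d-change} and the transformation $\sigma\mapsto\sigma\cdot(j,j+1)$ of the permutation this matches the two sides term by term. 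When $i\in\{j,j+1\}$ the same mechanism works but with a relabeling $i'=j+1$ (resp.\ $i'=j$), since $s_j(x-v_j)=s_jx-v_{j+1}$ and $s_j(x-v_{j+1})=s_jx-v_j$; one then checks $d_{j+1}^+(s_jx)=d_j^+(x)$ and $\sigma'(j+1)=\sigma(j)$ from Proposition~\ref{prop:d-change}, so both sides again correspond.

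The hard part will be the few configurations in which this relabeling breaks down. The genuine obstacles are the degenerate boundary case $i=j+1$ with $a_j(x)=-1$, where $a_j(x-v_i)=0$ lands on a wall so the descent recursion cannot be applied as stated and the step must be run in a degenerate form (or a different descent chosen), and, above all, the affine wraparound: when the vanishing run in the base case passes the index $0$, the computation has to be carried out through $Q_0=\pi^{-1}Q_1\pi$, so the period $L$ and the cyclic reading of the indices $v_{\sigma(i)+j}$ enter. Tracking the shifts $l\,a_l^{\vee}+v_{l+1}$ in the $Q_l$-formula through this wraparound, so that the iterated arguments of $g$ land exactly on the points $x-v_{i+j}$ and $x$, is where I expect the bulk of the bookkeeping to lie.
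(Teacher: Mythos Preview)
Your strategy is viable but genuinely different from the paper's, and the two hard cases you flag are precisely what the paper's argument is designed to sidestep.

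The paper does not induct on $\ell(w_x)$. Instead it introduces the half-integer midpoint $x':=x-\tfrac12 v_i$ and notes that for every $a\in R^{+}$ one has $|a(x)-a(x')|,\,|a(y)-a(x')|\le\tfrac12$, so $I(x)\subset I(x')$ and $I(y)\subset I(x')$ with $y:=x-v_i$. Lemma~\ref{lem:key} then gives two length-additive factorizations $w_{x'}=w_{w_xx'}\,w_x=w_{w_yx'}\,w_y$. Since $w_xx\in\overline{C_+}$ and $w_xx'=w_xx-\tfrac12 v_p$ with $p=\sigma(i)$, one reads off $I(w_xx')$ directly and obtains $w_{w_xx'}=s_{p+l-1}\cdots s_p$ with $l=d_i^{+}(x)$; a dual computation at $y$ gives $w_{w_yx'}$. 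Equating the two factorizations yields
\[
Q_{w_y}=Q_{q-1}^{-1}\cdots Q_{q-l'}^{-1}\,Q_{p+l-1}\cdots Q_p\,Q_{w_x},\qquad
w_yy=w_xx-v_{p+l},
\]
so that $G(f)(y)=(t_{v_{p+l}}Q_{p+l-1}\cdots Q_p\,Q_{w_x}f)(w_xx)$. The finish is then a repeated application of the commutation relation
\[
t_{v_{j+1}}Q_j=Q_jt_{v_j}+\alpha+(1-\beta)\,t_{v_{j+1}},
\]
using that $a_{p+j}(w_xx)=0$ for $0\le j<l$, which peels the $Q$'s off and produces exactly the right-hand side of the lemma.

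Compared with your plan: the midpoint trick plus Lemma~\ref{lem:key} replaces your induction and your base-case computation of $w_{x-v_i}$ in one move, and because all indices are read modulo $k$ from the outset, the affine wraparound through $a_0$ requires no separate treatment. Likewise the degenerate configuration $i=j+1$, $a_j(x)=-1$ never arises, since no descent at $y$ is ever invoked. Your route would work, but the commutation relation above (rather than the explicit three-case formula for $Q_l$) is the cleaner engine, and the midpoint device is the missing idea that collapses your case analysis.
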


\begin{proof}
In the proof we fix $x$ and $i$, and set $y:=x-v_{i} \in X$,  
$x':=x-\frac{1}{2}v_{i} \in \mathbb{Q}\otimes_{\mathbb{Z}}X$, 
$l=d_{i}^{+}(x)$ and $p=\sigma(i)$.  
For any $a \in R^{+}$, it holds that $a(x')=a(x)-(Da)(v_{i})/2=a(y)+(Da)(v_{i})/2$. 
Since $|(Da)(v_{i})| \le 1$, the two sets $I(x)$ and $I(y)$ are included in $I(x')$. 
Therefore $w_{x'}=w_{w_{x}x'}w_{x}=w_{w_{y}x'}w_{y}$ and 
$\ell(w_{x'})=\ell(w_{w_{x}x'})+\ell(w_{x})=\ell(w_{w_{y}x'})+\ell(w_{y})$ from 
Lemma \ref{lem:key}. 

Let us write down $w_{w_{x}x'}$. 
Since $w_{x}x \in \overline{C_{+}}$ and $a(w_{x}x')=a(w_{x}x)-(Da)(v_{p})/2$ for any $a \in R$, 
we have
\begin{align*}
I(w_{x}x')=\{ a \in R^{+} \, | \, a(w_{x}x)=0, \, (Da)(v_{p})>0 \}.  
\end{align*}
Now note that $l=d_{p}^{+}(w_{x}x)$ because $w_{x}$ is shortest. 
If $z \in \overline{C_{+}}$, it holds that 
$d_{j}^{+}(z)={\rm max}\{0 \le c \le k-1 \, | \, \sum_{r=j}^{j+c-1}a_{r}(z)=0 \}$.  
Therefore 
$I(w_{x}x')=\{ s_{p} \cdots s_{p+j-1}(a_{p+j})\}_{j=0}^{l}$, 
where the index $j$ of $s_{j}$ and $a_{j}$ is read modulo $k$. 
Thus we get 
\begin{align}\label{eq:shortest1}
w_{w_{x}x'}=s_{p+l-1} \cdots s_{p+1} s_{p}.  
\end{align}
Note that $a_{p+j}(w_{x}x)=0$ for $0 \le j <l$. 

Starting from the fact 
\begin{align*}
I(w_{y}x')=\{ a \in R^{+} \, | \, a(w_{y}y)=0, \, (Da)(v_{q})<0 \},    
\end{align*}
where $(Dw_{y})(v_{i})=v_{q}$, 
we see that 
\begin{align}\label{eq:shortest2}
w_{w_{y}x'}=s_{q-l'}\cdots s_{q-1}, 
\end{align}
where $l':=d_{i}^{-}(y)$. 
The index $j$ of $s_{j}$ in the right hand side is read modulo $k$. 
Here note that $a_{q-j}(w_{y}y)=0$ for $1 \le j \le l'$. 

{}From \eqref{eq:shortest1} and \eqref{eq:shortest2}, we find that 
\begin{align*}
& 
s_{q-l'}\cdots s_{q-1}w_{y}=s_{p+l-1} \cdots s_{p+1} s_{p} w_{x}, \\ 
&  
Q_{w_{y}}=Q_{q-1}^{-1}\cdots Q_{q-l'}^{-1}
Q_{p+l-1} \cdots Q_{p+1} Q_{p} Q_{w_{x}}.  
\end{align*}
The first relation above implies that 
\begin{align*}
w_{y}y=(s_{p+l-1} \cdots s_{p+1} s_{p}w_{x})(x-v_{i})=
w_{x}x-v_{p+l}.   
\end{align*}
Therefore 
\begin{align*}
G(f)(y)=(Q_{w_{y}}f)(w_{y}y)=
(t_{v_{p+l}}Q_{p+l-1} \cdots Q_{p+1} Q_{p} Q_{w_{x}}f)
(w_{x}x).  
\end{align*}
Using the commutation relation 
\begin{align}\label{eq:tQ}
t_{v_{j+1}}Q_{j}=Q_{j}t_{v_{j}}+\alpha+(1-\beta)t_{v_{j+1}}, \quad 
t_{v_{j'}}Q_{j}=Q_{j}t_{v_{j'}} \, (j'\not=j, j+1),  
\end{align}
and $a_{p+j}(w_{x}x)=0 \, (0 \le j <l)$, we obtain the desired formula.  
\end{proof}

Now let us prove Theorem \ref{thm:main}. 
We fix $x \in X$, and let $\sigma$ be the permutation given in Lemma \ref{lem:main}.  
Identify the set $\{1, \ldots , k\}$ with $\mathbb{Z}/k\mathbb{Z}$, 
and decompose it into intervals of the form $[p, p+l] \, (1\le p \le k, \, 0\le l \le k-1)$ 
having the property 
\begin{align*}
a_{p-1}(w_{x}x)>0, \quad a_{p+j}(w_{x}x)=0 \,\, (0 \le j<l), \quad a_{p+l}(w_{x}x)>0.  
\end{align*}
Take one interval $[p, p+l]$. 
Then $d_{\sigma^{-1}(p+j)}^{+}(x)=d_{p+j}^{+}(w_{x}x)=l-j$ and 
$d_{\sigma^{-1}(p+j)}^{-}(x)=d_{p+j}^{-}(w_{x}x)=j$ for $0\le j \le l$. 
{}From Lemma \ref{lem:main} we have 
\begin{align*}
& 
\sum_{j=0}^{l}\beta^{d_{\sigma^{-1}(p+j)}^{-}}
((t_{v_{\sigma^{-1}(p+j)}}-\alpha d_{\sigma^{-1}(p+j)}^{+})G(f))(x) \\ 
&=\sum_{j=0}^{l}\beta^{j}
((t_{v_{p+j}}+(1-\beta)\sum_{r=1}^{l-j}t_{v_{p+j+r}})Q_{w_{x}}f)(w_{x}x)=\sum_{j=0}^{l}
(t_{v_{p+j}}Q_{w_{x}}f)(w_{x}x).    
\end{align*}
The above relation holds on each interval, and we get 
\begin{align*}
(HG(f))(x)=\sum_{i=1}^{k}(t_{v_{i}}Q_{w_{x}}f)(w_{x}x).  
\end{align*}
{}From \eqref{eq:tQ}, the operator $\sum_{i=1}^{k}t_{v_{i}}$ commutes with $Q_{w} \, (w \in W)$. 
Therefore if $f$ is an eigenfunction of $\sum_{i=1}^{k}t_{v_{i}}$ with eigenvalue $\lambda$, 
it holds that $(HG(f))(x)=\lambda f(x)$. 
This completes the proof.


\section{Bethe wave functions}\label{sec:bethe}

Let us construct $\widehat{W}$-invariant eigenfunctions by means of the Bethe ansatz method: 

\begin{prop}
Suppose that $p=(p_{1}, \ldots , p_{k}) \in (\mathbb{C}^{\times})^{k}$ is 
a solution of the system of algebraic equations
\begin{align}\label{eq:bethe-eq}
p^{L}_{i}=\prod_{j=1 \atop (j\not=i)}^{k}
\frac{\beta p_{i}-p_{j}-\alpha}{p_{i}-\beta p_{j}+\alpha}
\quad (1 \le i \le k).  
\end{align} 
Define the function $h_{p}$ by 
\begin{align}\label{eq:bethe-sol}
h_{p}(x)=\sum_{\sigma \in \mathfrak{S}_{k}}\mathop{\rm sgn}(\sigma)\!\!\! 
\prod_{1\le i<j \le k}(\beta p_{\sigma(i)}-p_{\sigma(j)}-\alpha)
\prod_{i=1}^{k}p_{\sigma(i)}^{-\epsilon_{i}(x)} 
\quad (x \in \overline{C_{+}}) 
\end{align}
and $h_{p}(w x)=h_{p}(x)$ for any $w \in W$.  
Then $h_{p}$ is $\widehat{W}$-invariant and 
an eigenfunction of $H$ with eigenvalue $\sum_{i=1}^{k}p_{i}$. 
\end{prop}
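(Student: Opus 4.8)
The plan is to prove the proposition by verifying three things in sequence: that $h_p$ is well-defined and $\widehat{W}$-invariant, that it is an eigenfunction of the ``half Laplacian'' $\sum_{i=1}^{k} t_{v_i}$ on the fundamental chamber, and finally that applying the propagation operator $G$ recovers $h_p$ itself, so that Theorem~\ref{thm:main} delivers the eigenvalue equation $H h_p = (\sum_i p_i) h_p$. The cleanest route is to recognize the sum over $\mathfrak{S}_k$ in \eqref{eq:bethe-sol} as a plane-wave (Bethe ansatz) superposition: on the chamber $\overline{C_+}$ the function is a linear combination of the exponentials $x \mapsto \prod_i p_{\sigma(i)}^{-\epsilon_i(x)}$, each of which is an eigenfunction of $\sum_i t_{v_i}$. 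I would first check that each plane wave $e_p(x) := \prod_{i=1}^k p_i^{-\epsilon_i(x)}$ satisfies $(\sum_i t_{v_i}) e_p = (\sum_i p_i) e_p$, since $t_{v_i}$ shifts $\epsilon_i(x)$ by $-1$ and hence multiplies by $p_i$; summing over $\mathfrak{S}_k$ preserves this eigenvalue because every summand shares it.

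Next I would establish the connection with $G$. The natural strategy is to show that on $\overline{C_+}$ the function $h_p$ equals $G(f)$ for a suitable plane-wave eigenfunction $f$ of $\sum_i t_{v_i}$, and then invoke Theorem~\ref{thm:main}. Concretely, take $f = e_p$; since $w_x = \mathrm{id}$ for $x \in \overline{C_+}$, we have $G(f)(x) = f(x)$ there, so I must instead exhibit $h_p$ as $G$ applied to the single plane wave and check that the antisymmetrized form \eqref{eq:bethe-sol} is exactly what $G$ produces once one propagates $e_p$ off the chamber using the integral-reflection operators $Q_i$. The key computation is to apply $Q_i$ to a plane wave: using the defining formula for $Q_i$ together with the geometric sums $\sum_{j=1}^{m} p^{\pm j}$, one finds that $Q_i e_p$ is again a combination of plane waves with coefficients built from the factors $\beta p_i - p_{i+1} - \alpha$ and $p_i - \beta p_{i+1} + \alpha$ appearing in \eqref{eq:bethe-eq}. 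Tracking these coefficients through a reduced word for $w_x$ should reproduce the product $\prod_{i<j}(\beta p_{\sigma(i)} - p_{\sigma(j)} - \alpha)$ in \eqref{eq:bethe-sol}.

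The role of the Bethe equations \eqref{eq:bethe-eq} is precisely to guarantee consistency under the extended symmetry: the affine generator $\pi = t_{Lv_1}s_1\cdots s_{k-1}$ wraps around the periodic lattice of size $L$, and $\widehat{W}$-invariance of $h_p$ requires that propagating a plane wave all the way around the circle returns it to itself. The translation $t_{Lv_1}$ contributes the factor $p_i^{L}$, while the passage through $s_1 \cdots s_{k-1}$ contributes the ratio $\prod_{j \ne i}(\beta p_i - p_j - \alpha)/(p_i - \beta p_j + \alpha)$; demanding these cancel is exactly \eqref{eq:bethe-eq}. So I would verify $\pi$-invariance of $h_p$ by this monodromy argument, which together with the manifest $W_0$-antisymmetrization that becomes $W$-invariance under the prescription $h_p(wx) = h_p(x)$ yields full $\widehat{W}$-invariance.

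The main obstacle I expect is the middle step: verifying that the explicit antisymmetric sum \eqref{eq:bethe-sol} is genuinely fixed by the $Q_w$ propagation and matches the coefficient structure, i.e.\ that repeated application of $Q_i$ to plane waves generates precisely the factors $\prod_{i<j}(\beta p_{\sigma(i)} - p_{\sigma(j)} - \alpha)$ with the correct signs $\mathrm{sgn}(\sigma)$. This requires care with the piecewise definition of $Q_i$ (the three cases $a_i(x) \gtrless 0$) and with the geometric-series resummation, and one must confirm that the result is independent of the reduced expression chosen for $w_x$ — which is ultimately guaranteed by the braid relations of Proposition~\ref{prop:H-module}, but still needs to be reconciled with the closed product form. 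Once this matching is in hand, the eigenvalue statement follows immediately from Theorem~\ref{thm:main} applied to the plane-wave eigenfunction.
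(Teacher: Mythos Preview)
Your plan assembles the right ingredients --- the plane-wave expansion, the action of $Q_i$ on a plane wave, and the role of the Bethe equations for $\pi$-invariance --- but the central step is set up incorrectly. Applying $G$ to a single plane wave $e_p$ cannot produce $h_p$: as you yourself observe, $G(e_p)(x)=e_p(x)$ for $x\in\overline{C_+}$, whereas $h_p$ on $\overline{C_+}$ is the full antisymmetric sum \eqref{eq:bethe-sol}, so $G(e_p)\neq h_p$ already on the fundamental chamber. Propagating $e_p$ ``off the chamber'' does not repair this; $G(e_p)$ is indeed an eigenfunction of $H$ by Theorem~\ref{thm:main}, but it is not $W$-invariant, and there is no mechanism by which the propagation of a single plane wave manufactures the sum over $\mathfrak S_k$ with the correct coefficients.

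The paper's proof feeds a different function into $G$: it extends the antisymmetric sum \eqref{eq:bethe-sol} to a function $f_p$ on all of $X$, not just $\overline{C_+}$. Your computation of $Q_i$ on a plane wave then serves a different purpose than you envisioned: from the identity
\[
Q_i g_p \;=\; s_i g_p + \frac{\alpha+(1-\beta)p_{i+1}}{p_i-p_{i+1}}(s_i-1)g_p,\qquad g_p(x)=\prod_{j}p_j^{-\epsilon_j(x)},
\]
one checks that the specific coefficients $\mathrm{sgn}(\sigma)\prod_{i<j}(\beta p_{\sigma(i)}-p_{\sigma(j)}-\alpha)$ are precisely those making $Q_i f_p=f_p$ for $1\le i<k$. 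The Bethe equations \eqref{eq:bethe-eq} then give $\pi f_p=f_p$ (your monodromy picture is correct here), and since $Q_0=\pi^{-1}Q_1\pi$ this yields $Q_w f_p=f_p$ for every $w\in W$. Hence $G(f_p)(x)=(w_x^{-1}f_p)(x)=f_p(w_x x)=h_p(x)$, and Theorem~\ref{thm:main} applies directly. In short: the antisymmetrisation is not produced by $G$; it must be built into the input so that the input is already fixed by all the integral-reflection operators, after which $G$ acts trivially and simply certifies that the $W$-invariant extension $h_p$ is an eigenfunction of $H$.
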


\begin{proof}
Denote by $f_{p} \in F(X)$ the function defined by 
the right hand side of \eqref{eq:bethe-sol} on the whole $X$. 
The function $g_{p}(x):=\prod_{i=1}^{k}p_{i}^{-\epsilon_{i}(x)}$ satisfies 
\begin{align*}
Q_{i}g_{p}=s_{i}g_{p}+
\frac{\alpha+(1-\beta) p_{i+1}}{p_{i}-p_{i+1}}(s_{i}-1)g_{p}.
\end{align*}
Hence $Q_{i}f_{p}=f_{p}$ for all $1 \le i <k$. 
Moreover it holds that $\pi f_{p}=f_{p}$ if $\{p_{i}\}_{i=1}^{k}$ is 
a solution to \eqref{eq:bethe-eq}. 
Therefore we get $G(f_{p})(x)=f_{p}(w_{x}x)=h_{p}(w_{x}x)=h_{p}(x)$ for any $x \in X$. 
{}From the equality $\sum_{i=1}^{k}t_{v_{i}}f_{p}=(\sum_{i=1}^{k}p_{i})f_{p}$ 
and Theorem \ref{thm:main}, 
we find that $h_{p}$ is an eigenfunction of $H$ with eigenvalue $\sum_{i=1}^{k}p_{i}$. 
Since $\widehat{W}x\cap\overline{C_{+}}=\{\pi^{n}w_{x}x\}_{n \in \mathbb{Z}}$, 
we have $G(f_{p})(\pi x)=G(f_{p})(x)$. 
Hence $h_{p}$ is $\widehat{W}$-invariant. 
\end{proof}

In the case of $\alpha=0$ the Bethe wave function $h_{p}(x)$ \eqref{eq:bethe-sol} 
can be written in terms of the Hall-Littlewood polynomials \cite{Mac}. 
Let $t \in \mathbb{C}^{\times}$ be a parameter. 
For a partition $\lambda=(\lambda_{1}, \ldots , \lambda_{k})$, 
we set 
\begin{align*}
v_{\lambda}(t)=\prod_{a \ge 1}\prod_{n=1}^{m_{a}}\frac{1-t^{n}}{1-t}, 
\end{align*}
where $m_{a}$ is the number of $\lambda_{j}$ equal to $a$. 
The Hall-Littlewood polynomial $P_{\lambda}$ is defined by 
\begin{align*}
P_{\lambda}(z_{1}, \ldots , z_{k} ; t):=\frac{1}{v_{\lambda}(t)}
R_{\lambda}(z_{1}, \ldots , z_{k} ; t), 
\end{align*}
where 
\begin{align*}
R_{\lambda}(z_{1}, \ldots , z_{k} ; t):=\sum_{\sigma \in \mathfrak{S}_{k}}
\prod_{1 \le i<j \le k}\frac{z_{\sigma(i)}-tz_{\sigma(j)}}{z_{\sigma(i)}-z_{\sigma(j)}}
\prod_{i=1}^{k}z_{\sigma(i)}^{\lambda_{i}}. 
\end{align*}
Then the Bethe wave function with $\alpha=0$ is written as 
\begin{align}
h_{p}(x)|_{\alpha=0}=\Delta(p)R_{\epsilon(x)}(p_{1}^{-1}, \ldots , p_{k}^{-1} ; \beta) \quad 
(x \in \overline{C_{+}}),   
\label{eq:HL}
\end{align}
where $\Delta(p):=\prod_{1\le i<j \le k}(p_{i}-p_{j})$ and 
$\epsilon(x):=(\epsilon_{1}(x), \ldots , \epsilon_{k}(x))$.


\medskip 
\section*{Acknowledgments}

The research of the author is supported by Grant-in-Aid for 
Young Scientists (B) No.\,23740119. 
The author is grateful to Saburo Kakei and Tomohiro Sasamoto for discussions. 
He also thanks the referee for valuable comments. 


\end{document}